\newif\iffullversion
\pgfplotsset{compat=1.15}
\setlist{noitemsep,topsep=0.03in} 
\newcommand{\DBKE}{\styleAlgorithm{DBKE}}
\newcommand{\Sym}{\styleAlgorithm{Sym}}
\newcommand{\Wrap}{\styleAlgorithm{Wrap}}
\newcommand{\Unwrap}{\styleAlgorithm{Unwrap}}
\newcommand{\Degrade}{\styleAlgorithm{Degrade}}
\newcommand{\lightparagraph}[1]{\smallskip\textit{#1}}
\newcommand{\getsr}{{\:\leftarrow\hspace*{-3pt}\raisebox{.5pt}{$\scriptscriptstyle\$$}\:}}
\newcommand{\tor}{{\:\raisebox{.5pt}{$\scriptscriptstyle\$$}\hspace*{-3pt}\rightarrow\:}}
\newcommand{\bit}{\{0,1\}}
\newcommand{\true}{\texttt{true}}
\newcommand{\Adversary}{\mathcal{A}}
\newcommand{\Bdversary}{\mathcal{B}}
\newcommand{\Adv}[2]{\mathrm{Adv}^{\styleSecurityNotion{#1}}_{#2}}
\newcommand{\Exp}[2]{\mathrm{Exp}^{\styleSecurityNotion{#1}}_{#2}}
\newcommand{\styleSecurityNotion}[1]{\textsf{\upshape #1}}
\newcommand{\styleAlgorithm}[1]{\textrm{\upshape #1}}
\newcommand{\Dec}{\styleAlgorithm{Dec}}
\newcommand{\Enc}{\styleAlgorithm{Enc}}
\newcommand{\gamechange}[2][blue]{\begin{tikzpicture}[baseline=(X.base)] \node[rectangle,draw,#1] (X) at (0,0) {\color{#1} #2}; \end{tikzpicture}}
\newcommand{\seed}{r}
\newcommand{\partialseed}{\overline{r}}
\newcommand{\salt}{s}
\newcommand{\checksum}{h}
\newcommand{\key}{k}
\newcommand{\difficulty}{d}
\newcommand{\maxdifficulty}{D}
\newcommand{\wrappedkey}{w}
\newcommand{\ctxt}{c}
\newcommand{\msg}{m}
\newcommand{\sklen}{\lambda}
\newcommand{\KeySp}{\mathcal{K}}
\newcommand{\MsgSp}{\mathcal{M}}
\newcommand{\bbN}{\mathbb{N}}
\newcommand{\shortlongeqn}[2][.]{$#2$#1}
\begin{document}
\title{ArchiveSafe: Mass-Leakage-Resistant Storage from Proof-of-Work}
\author{Moe Sabry\inst{1} \and
Reza Samavi\inst{2} \and
Douglas Stebila\inst{3}}
\institute{McMaster University, Canada \email{alym2@mcmaster.ca} \and
Ryerson University, Vector Institute, Canada \email{samavi@ryerson.ca} \and
University of Waterloo, Canada \email{dstebila@uwaterloo.ca}
}
\maketitle              \begin{abstract}
Data breaches---mass leakage of stored information---are a major security concern. Encryption can provide confidentiality, but encryption depends on a key which, if compromised, allows the attacker to decrypt everything, effectively instantly. Security of encrypted data thus becomes a question of protecting the encryption keys.
In this paper, we propose using \emph{keyless encryption} to construct a \emph{mass leakage resistant archiving system}, where decryption of a file is only possible after the requester, whether an authorized user or an adversary, completes a \emph{proof of work} in the form of solving a cryptographic puzzle. 
This proposal is geared towards protection of infrequently-accessed \emph {archival data}, where any one file may not require too much work to decrypt, decryption of a large number of files---mass leakage---becomes increasingly expensive for an attacker.
We present a prototype implementation realized as a user-space file system driver for Linux.  We report experimental results of system behaviour under different file sizes and puzzle difficulty levels.
Our keyless encryption technique can be added as a layer \emph{on top of} traditional encryption: together they provide strong security against adversaries without the key and resistance against mass decryption by an attacker.

\iffullversion 
\keywords{Filesystem encryption \and Data archiving \and Proof-of-work \and Client puzzles \and Mass leakage \and Data breaches}
\fi

\end{abstract}

\section{Introduction}\label{section:Introduction}

Attacks on information systems have become increasingly common.
Whatever the attack vector, a frequent outcome is a \emph{data breach}, in which a large volume of sensitive information is stolen from the victim organization.
Archival data---stored indefinitely but not regularly accessed---has been targeted in many data breaches~\cite{ameritrade, adobe, bankny}, leading to loss of privacy, loss of reputation, business setbacks, and costly remediation.

Modern IT security protection techniques focus on \emph{defense-in-depth}, one component of which is encryption of data at rest to support confidentiality.  However, encryption, even when implemented using secure, carefully implemented algorithms, is typically all-or-nothing: if the key is secure, the attacker learn virtually nothing, and the attack cannot succeed, but once the key is compromised, the attacker can decrypt everything, with minimal overhead.

Hardware-assisted cryptography, such as hardware security modules (HSMs), trusted computing, or secure enclaves like Intel SGX\footnote{https://software.intel.com/en-us/sgx} or ARM TrustZone\footnote{https://developer.arm.com/ip-products/security-ip/trustzone} may prevent keys from leaking if decryption is only ever done inside a trusted module, but many IT systems remain software-only without use of these technologies.

\paragraph{Scenario and goals.}
Against these types of threats, we aim develop a \emph{mass leakage resistant archiving system} with the goal of enhancing defense-in-depth for encryption.  
We aim to preserve confidentiality even in the presence of an adversary with full access to the system, including ciphertexts and decryption keys.
While no system can provide full cryptographic security in the face of such a well-informed adversary, our goal is to increase the economic cost of \emph{mass leakage}, which for our purposes is defined as an adversary obtaining the plaintexts of a large number of files or database records, not just one.

Unlike most applications of cryptography, we do not aim to achieve a difference in work factor between honest parties and adversaries. Rather, we assume that honest parties and adversaries have different \emph{goals}, and we aim to change the economics of data breaches by achieving a difference in the cost of honest parties and adversaries achieving their goals.
In our scenario, honest parties need to store a large number of files, but only access a small number of them.  Consider for example a tax agency: after processing millions of citizens' tax returns each year, those files must be stored for several years in case an audit or further analysis is required, but only a small fraction of those records will end up actually being pulled for analysis.  In contrast, an adversary breaching the tax agency's records may want to read a large number of files to identify good candidates for identity theft or other criminal actions.

\subsection{Contributions}

We design a system, called \emph{ArchiveSafe}, where access to a resource is only possible after the requester---whether an honest user on adversary---has expended sufficient computational effort, in the form of solving a ``moderately hard'' proof-of-work or cryptographic puzzle \cite{dwork1992pricing}.  Since we will not rely on the access control system nor any keys to be uncompromised, the decryption operation itself must be tied to the cryptographic puzzle.  
In our approach, while a proper cryptographic key is used to encrypt a file, the encryption key is not stored, even for legitimate users.
Instead, the key is wrapped in a proof-of-work-based encryption scheme with a desired difficulty level, and all users---adversarial or honest---must perform the proof-of-work to recover the key and then decrypt the file. 

Our main technical tool for building of ArchiveSafe is a new cryptographic primitive that we call \emph{difficulty-based keyless encryption} (DBKE), which is an encryption scheme that does not make use of a stored key.
We give a generic construction for DBKE from a standard symmetric encryption scheme and a new tool called \emph{difficulty-based keyless key wrap}, which wraps the symmetric encryption key in an encapsulation that can only be unwrapped by performing a sufficiently high number of operations, as in a proof-of-work scheme.
Difficulty-based keyless key wrap can be achieved from many types of cryptographic puzzles, and we show one example based on hash function partial pre-image finding \cite{juels1999client,JJ99}.
One interesting feature of using this form of hash-based puzzle, which to our knowledge is a novel observation on hash-based puzzles, is that the puzzle and ciphertext can be \emph{degraded}---i.e., turned into a harder one---essentially for free.
We use the reductionist security methodology to formalize the syntax and security properties of difficulty-based keyless encryption and keyless key wrap and show that our hash-based construction achieves these properties.

\begin{figure}[t]
\begin{center}
\scalebox{0.8}{
\begin{tikzpicture}[xscale=1.25,yscale=-0.4]
\node[rectangle,draw,minimum height=0.25in] at (0,-0.7) {Application};
\node[rectangle,draw,minimum height=0.25in] at (3,-0.7) {ArchiveSafe Driver};
\node[rectangle,draw,minimum height=0.25in] at (6,-0.7) {Underlying Storage};
\draw[->] (0,1.25)-- node[above] {write $\msg$} +(2,0);
\node at (3,2) {generate puzzle, key};
\node at (3,3) {$\ctxt \gets \Enc(\key, \msg)$};
\draw[->] (4,4.25)-- node[above] {write $puz$, $\ctxt$} +(2,0);
\draw[dotted,thick] (-0.5,5) -- +(7.5,0);
\draw[->] (0,6.25) -- node[above] {read} +(2,0);
\draw[->] (4,7.25) -- node[above] {read} +(2,0);
\draw[<-] (4,8.25) -- node[above] {puz, $\ctxt$} +(2,0);
\node at (3,9) {$\key \gets \mbox{Solve}(puz)$};
\node at (3,10) {$\msg \gets \Dec(\key, \ctxt)$};
\draw[<-] (0,11.25) -- node[above] {$\msg$} +(2,0);
\end{tikzpicture}
}
\end{center}
\vspace{-1em}
\caption{High-level overview of ArchiveSafe, showing a write followed by a read.}
\label{fig:high-level}
\vspace{-1em}
\end{figure}

\Cref{fig:high-level} gives a high-level overview of how an application interacts with the ArchiveSafe system.  The two main operations performed by the ArchiveSafe system are (i) creating a puzzle and encrypting during writes, and (ii) solving the puzzle and decrypting during reads.
ArchiveSafe could be used in a variety of data storage architectures: on a local computer; on a file server; or in a cloud architecture.  
In a file server or cloud scenario, an IT system may be set up so the file server enforces that all files are protected by ArchiveSafe during writes by centralizing puzzle creation and encryption, but leaves puzzle solving and decryption to clients.  Since puzzle creation and encryption in our system is cheap, this avoids bottlenecks on the file server.  Individual client applications occasionally reading a small number of files have to do a moderate, but not prohibitive, amount of work to solve the puzzle to obtain the key to decrypt.

We build a prototype implementation showing the use of ArchiveSafe on a local computer.
Our prototype is implemented as a filesystem-in-userspace (FUSE) driver on Linux.
A FUSE driver can be used to intercept I/O operations in certain directories (mount points) before reading/writing to disk.
This allows us to implement ArchiveSafe in a manner that is transparent to the application, as well as transparent to the underlying storage mechanism, which could be a local disk (with normal disk encryption enabled or not), or a network share mounted locally.
We validate the performance of our prototype implementation, focusing primarily on ensuring that write operations incur minimal overhead.
(Since system administrators can set policies with puzzle difficulties requiring seconds or minutes of computational effort to solve, slow read performance is \emph{intended}, and there is little sense in performance measurements on reads, beyond checking that they scale as intended with no unexpected overhead.)
We envision that, when used on a local computer, ArchiveSafe would be applied only to a subset of the directories on the computer.  One might use ArchiveSafe to protect documents created by the user more than a certain number of days ago, but would not use it on system libraries and executables.

We highlight that ArchiveSafe is meant to add \emph{defense-in-depth} to confidentiality: one would typically not rely on ArchiveSafe alone, but combine it with traditional encrypted file system or database encryption.  
In this combination, traditional encryption using strong algorithms and keys, provides a high level of security if the keys are not compromised, but we still have the difficulty-based keyless encryption of ArchiveSafe as a bulwark if the keys are compromised. To succeed under this setup, the adversary must compromise the traditional encryption keys in addition to solving a large number of DBKE puzzles corresponding to the files in the archive.

\subsection{Related Work}

\paragraph{Filesystem encryption.}
Blaze \cite{blaze1993cryptographic} introduced the Cryptographic File System (CFS). CFS uses a different key for each directory, and the user is required to enter the key in every session to access the directory and its contents.  
Subsequent proposals include the Transparent Cryptographic File System (TCFS) \cite{cattaneo2001design}, Cryptfs \cite{zadok1998cryptfs} and Ncryptfs \cite{wright2003ncryptfs}. 
In recent years, encrypted filesystems have become widespread, and all major operating systems provide implementations, often enabled by default (FileVault on Apple's macOS\footnote{https://support.apple.com/en-ca/HT204837}, BitLocker on Microsoft Windows\footnote{https://docs.microsoft.com/en-us/windows/security/information-protection/bitlocker/bitlocker-overview}, and a range of options on Linux such as Linux Unified Key Setup (LUKS)\footnote{https://guardianproject.info/archive/luks/}).
The common practice in these technologies is to use a single master  key from which multiple keys are derived per-file, per-directory, or per-sector; the master key is usually stored on the device itself, encrypted under the user's password.  Once the user has logged in, the filesystem transparently and automatically decrypts files.

Over the past decade, there has been much research on encrypted databases (e.g., \cite{cryptdb,arx,dbpuzz}) that retain some functionality for legitimate users, for example using order-preserving encryption so that sorting a column of ciphertexts yields approximately the same order as if the plaintexts were sorted.  This increased functionality comes at the cost of information leakage, and there is an extensive debate in the literature about these techniques.

\noindent
\paragraph{Proof-of-work systems.}
Dwork and Naor \cite{dwork1992pricing} introduced client puzzles to control junk email: recipients would only accept emails if the sender was able to solve a puzzle.  It should be ``moderately hard'' for the sender to solve the puzzle, but easy for recipient to check whether a solution is valid.  This was the first example of a proof-of-work system, which in general grants access to a resource dependent on the requester being able to demonstrate proof that they have performed some work, typically in the form of solving a puzzle.  Client puzzles were for many years suggested as a means to prevent denial of service attacks in a range of contexts \cite{juels1999client,aura2000resistant,dean2001using,waters2004new,suriadi2011defending,rangasamy2012effort}, but have seen renewed interest as a building block for cryptocurrencies and blockchains. Client puzzles are generally classified either based on their limiting factors in solving the puzzle (CPU-bound versus memory-bound) or based on whether the operations required to solve the puzzle is parallelizable.
The simplest CPU-bound puzzles are based on cryptographic hash functions, such as: finding a preimage of a hash given a hint (e.g., a part of the preimage) \cite{juels1999client,JJ99}; or finding an input whose hash starts with a certain number of zero bits \cite{Bac04}. 
Non-parallelizable CPU-bound puzzles often rely on a number of theoretical approaches.  For example, \cite{rivest1996time} uses repeated squaring modulo an RSA modulus. 
Memory-bound puzzles \cite{abadi2005moderately,dwork2003memory} use techniques for which the best known solving algorithm involves a large number of memory accesses; it is argued that memory access time varies less than CPU speed between small and large computing platforms, and that building customized hardware is more expensive for memory-bound puzzles.

\noindent\paragraph{Proof-of-work systems for confidentiality.}
In \cite{rivest1996time}, time-lock encryption was proposed as a way of ``sending information into the future'', and focused specifically on hiding keys or data in a proof-of-work system that had a predictable wall-clock time for solving, thus focusing on puzzles for which the best known solving algorithm is inherently sequential.  Vargas et al. \cite{vargas2018mitigating} designed a database encryption system called ``Dragchute'' based on time-lock encryption, aiming to provide both confidentiality and the ability to demonstrate compliance with retention laws.  
Each ciphertext in this system is accompanied by an authentication tag which contains a non-interactive zero-knowledge proof.
 Solving the puzzle will yield a valid decryption key for the ciphertext; moreover, the proof can be checked much more efficiently than the full work required to solve and decrypt the ciphertext.  A simpler database encryption scheme relying on hash-based client puzzles, without any efficient verification of well-formedness, was proposed by Moghimifar \cite{dbpuzz}.

\section{Requirements}\label{section:Requirements}

In this section, we discuss the functionality and security requirements for a mass leakage resistant archiving system, which informs our construction and evaluation in subsequent sections.

\subsection{Design Criteria}\label{section:DesignCriteria}

\lightparagraph{Confidentiality in the face of compromised keys.}
The system should achieve some level of confidentiality even if all stored keys are compromised.
This means we assume that an adversary can learn a symmetric key or a private key corresponding to a public key
 stored for later use in decrypting a ciphertext, even if the key is stored in a separate key management service, trusted computing or secure enclave environment, or separate tamper-resistant device. 

\lightparagraph{Cooperation with traditional encryption.}
It should be possible to use the system in conjunction with the traditional encryption mechanisms applied to storage systems (folder/disk encryption, database encryption, etc.), so that strong confidentiality is achieved if keys are not compromised, but some confidentiality is retained in the face of compromised keys.

\lightparagraph{Reliance on industry standard cryptographic algorithms.}
Deployed IT systems should rely only on well-vetted, standardized cryptographic algorithms.
But all such algorithms for achieving confidentiality---public key or symmetric---require a secret key, seemingly conflicting with the first design criteria of confidentiality in the face of compromised keys.  
Our construction builds a mechanism for confidentiality without keys while still relying on standard cryptographic algorithms like AES for symmetric encryption: while a proper cryptographic key is used to encrypt data, that key is not kept, even by authorized users.
Instead, the key is wrapped in a proof-of-work-based encryption scheme with a desired difficulty level, and users must solve the proof-of-work to recover the key and then decrypt the data.  
We introduce difficulty-based keyless encryption in \Cref{section:DBKE} which formalizes this idea and generically construct it from standard cryptographic algorithms such as AES and Argon2.

\lightparagraph{Imposing a significant cost to access a large number of files while maintaining acceptable cost to access one file.}
Since we do not have a key that gives honest users an advantage over the adversary, we should look at things from the viewpoint of typical honest behaviour---periodically accessing a small number of files---versus adversary behaviour---accessing a large number of files in a data breach.
Proof-of-work and related techniques have long been used to achieve security goals from that viewpoint, whether in password hardening or client puzzles for denial of service resistance.  

\lightparagraph{Customizing file access cost.}
It should be possible for a system administrator or user to control the cost incurred by the adversary or honest user for accessing a file.
This may be set as a system-wide policy or a file-by-file basis, depending on the desired access control paradigm.
This is achieved in our system by varying the difficulty level of the puzzle wrapping the decryption key.

A related design criteria is the ability to customize file access cost \emph{over time}.
Demand for access to records may change over time; for example, records older than 5 years may be accessed much less frequently than more recent records.
Our system allows the file access cost to be increased with minimal effort, through a process we call \emph{puzzle degradation}, that could be performed as part of regular system maintenance.
This is a novel feature available from some types of puzzle constructions but not others, and in particular not from the number-theoretic repeated squaring non-parallelizable constructions used in time-lock puzzles \cite{rivest1996time} and the Dragchute database encryption system \cite{vargas2018mitigating}.

\subsection{Choice of Puzzle}
One of the major design decisions for our system is which type of puzzles to use: sequential versus parallelizable, and CPU-bound versus memory-bound.  

As our design criteria focus on mass leakage adversaries trying to decrypt \emph{many} files, and since we think of cost in a general economic sense, we do not have to restrict to proof-of-work mechanisms that are sequential/non-parallelizable.  Concerned with an adversary trying to decrypt many files who has parallel computing resources available to them, it does not matter whether they choose to deploy their parallel resources to sequentially decrypt each file quickly or in parallel decrypt many files more slowly. Overall, they will decrypt the same number of files with the same resources.  We also need not worry about the variability of puzzle solving time for individual instances, only the expected puzzle solving time for many instances.  These design choices are, for example, significantly different from those of the Dragchute system for database confidentiality and integrity from proof-of-work.  Moreover, parallelization permits honest users to reduce the latency in occasional access of files by taking advantage of short, on-demand use of cloud servers (see \Cref{tab:cost}).

Whereas sequential versus parallelizable puzzles is a qualitative choice for our scenario, CPU-bound versus memory-bound is a quantitative choice with respect to the economic cost.  To achieve a given dollar-cost-for-adversary, it is possible to pick appropriate parameters for both CPU-bound and memory-bound puzzles under appropriate cost and puzzle-solving assumptions.  So, a priori, either can be used in our constructions.  For our prototype we choose simple hash-based CPU-bound puzzles because puzzle creation is cheaper (thereby achieving extremely low overhead on write operations) and because they allow us to obtain novel useful functionality such as puzzle degradation (\Cref{section:PuzzleDegradation}), but with the hash function being Argon2 which is designed to be resistant to GPU and ASIC optimization.  
Picking appropriate difficulty levels for puzzles is something an adopter must do as a function of the tolerable cost for honest users to access data, the perceived risk of a data breach, and the anticipated value of the information to an adversary.  We do not aim to study such economic calculations exhaustively, but we provide one worked example in \Cref{section:Discussion} and \Cref{tab:cost}.

\subsection{Threat Model}

ArchiveSafe is a software system with one target asset, the data files. The security goal for the target asset is confidentiality.  
As shown in \Cref{fig:high-level}, information flows from the user application through the ArchiveSafe driver to the underlying storage during writes, and in the reverse direction during reads.  

An adversary could access the system either via the same mechanism as an honest user application (i.e., mediated by the ArchiveSafe driver), or may have direct access to the underlying storage.
We aim to achieve confidentiality against a strong adversary that can bypass the ArchiveSafe driver during read operations (e.g., because they are untrusted server administrators, or because they have compromised the kernel using privilege escalation), or who can directly read from the underlying storage (e.g., an untrusted cloud storage provider, or physical theft of a hard drive).
We do not consider in our threat model an adversary who undermines the write operation to intercept data during a write operation or who prevents the ArchiveSafe technique from being applied when saving files. We assume operations by honest parties are performed on a trusted and uncompromised system that faithfully deletes keys from memory once an operation is completed.

\section{Difficulty-Based Keyless Encryption}\label{section:DBKE}

A difficulty-based key encryption scheme is similar to a symmetric encryption scheme, except that no secret key is kept for use between the encryption and decryption algorithm. 

\begin{definition}[Difficulty-Based Keyless Encryption]\label{def:dbke}
A \emph{difficulty-based keyless encryption (DBKE) scheme} $\Delta$ for a message space $\MsgSp$ with maximum difficulty $\maxdifficulty \in \bbN$ consists of two algorithms:
\begin{itemize}
\item $\Delta.\Enc(\difficulty, \msg) \tor c$: A (probabilistic) encryption algorithm that takes as input difficulty level $\difficulty \le \maxdifficulty$ and message $\msg$ and outputs ciphertext $\ctxt$.
\item $\Delta.\Dec(\ctxt) \to \msg'$: A deterministic decryption algorithm that takes as input ciphertext $\ctxt$ and outputs message $\msg'$ or an error $\bot \not\in\MsgSp$.  	
\end{itemize}
\end{definition}

A DBKE $\Delta$ is \emph{correct} if, for all messages $m \in \MsgSp$ and all difficulty levels $\difficulty \le \maxdifficulty$, we have that $\Pr\left[ \Delta.\Dec(\Delta.\Enc(\difficulty, \msg))=\msg \right] = 1$, where the probability is taken over the randomness of $\Delta.\Enc$.

\begin{figure}[t]
\centering
\scalebox{0.9}{
\fbox{
\begin{tabular}{c|c|c}
\begin{minipage}[t]{0.31\textwidth}
\underline{$\Exp{db-ind}{\Delta,\difficulty}(\Adversary)$:}
\begin{enumerate}
\item $(m_0, m_1, st) \getsr \Adversary(1^\difficulty)$
\item $b \getsr \bit$
\item $c \getsr \Delta.\Enc(\difficulty, m_b)$
\item $b' \getsr \Adversary(c, st)$
\item return $(b' = b)$
\end{enumerate}
\end{minipage}
&~
\begin{minipage}[t]{0.3\textwidth}
\underline{$\Exp{ind}{\Pi}(\Adversary)$:}
\begin{enumerate}
\item $(m_0, m_1, st) \getsr \Adversary()$
\item $\key \getsr \KeySp$
\item $b \getsr \bit$
\item $c \getsr \Pi.\Enc(\key, m_b)$
\item $b' \getsr \Adversary(c, st)$
\item return $(b' = b)$
\end{enumerate}
\end{minipage}
&~
\begin{minipage}[t]{0.29\textwidth}
\underline{$\Exp{key-ind}{\Sigma,\difficulty}(\Adversary)$:}
\begin{enumerate}
\item $(k_0, \wrappedkey) \getsr \Sigma.\Wrap()$
\item $k_1 \getsr \KeySp$
\item $b \getsr \bit$
\item $b' \getsr \Adversary(\wrappedkey, k_0, k_1)$
\item return $(b' = b)$
\end{enumerate}
\end{minipage}
\end{tabular}
}
}
\caption{Security experiments for (\textit{left}) indistinguishability of difficulty-based keyless encryption scheme $\Delta$ at difficulty level $\difficulty$; (\textit{centre}) one-time indistinguishability of symmetric encryption scheme $\Pi$; and (\textit{right}) indistinguishability of difficulty-based keyless key wrap scheme $\Sigma$ with keyspace $\KeySp$ and difficulty level $\difficulty$.}
\label{fig:sec:db-ind}
\label{fig:sec:ind}
\label{fig:sec:key-ind}
\end{figure}

The desired security property for a DBKE is semantic security in the form of ciphertext indistinguishability.  Since there is no persistent secret key, there is no need to consider security notions incorporating chosen plaintext or chosen ciphertext attacks: each plaintext is protected by independent randomness.  The security experiment $\Exp{db-ind}{\Delta,\difficulty}(\Adversary)$ for an adversary $\Adversary$ trying to break indistinguishability of DBKE scheme $\Delta$ at difficulty level $\difficulty$ is shown in \Cref{fig:sec:db-ind}.  We define the advantage of such an adversary in the security experiment as
\shortlongeqn{\Adv{db-ind}{\Delta,\difficulty}(\Adversary) = \left| 2 \cdot \Pr\left[ \Exp{db-ind}{\Delta,\difficulty}(\Adversary) \Rightarrow \true \right] - 1 \right|}
Useful forms of $\Adv{db-ind}{\Delta,\difficulty}(\Adversary)$ will relate the amount of work done by the adversary, the difficulty level, and the adversary's success probability.  

\subsection{Generic construction of DBKE}\label{section:DBKEGenericConstruction}

\begin{figure}[b]
\begin{center}
	\scalebox{0.9}{
		\begin{tikzpicture}[yscale=-1,every node/.style={align=center,font=\scriptsize}]
		\node [rectangle,draw,text width=0.65in] (ENC-SYMENC) at (1,1) {Symmetric \\ Encryption \\ \smallskip $\Pi.\Enc$};
		\node [rectangle,draw,text width=0.65in] (ENC-KKW) at (1,3) {Keyless \\ Key Wrap \\ \smallskip $\Sigma.\Wrap$};
		\draw[->] (ENC-KKW) -- node[left] {$\key$} (ENC-SYMENC);
		\draw[->] (-1,1) -- node[above] {$\msg$} (ENC-SYMENC);
		\draw[->] (-1,3) -- node[above] {$\difficulty$} (ENC-KKW);
		\draw[dashed] (-0.1,0.3) -- (2.1,0.3) -- (2.1,3.7) -- (-0.1,3.7) -- (-0.1,0.3);
		\node [below] at (1,3.8) {DBKE encryption};
		\node [rectangle,draw,text width=0.65in] (FS) at (4,2) {File \\ System};
		\draw[->] (ENC-SYMENC) -- node[above] {$\ctxt$} (FS);
		\draw[->] (ENC-KKW) -- node[below] {$\wrappedkey$} (FS);
		\node [rectangle,draw,text width=0.65in] (DEC-SYMENC) at (7,1) {Symmetric \\ Decryption \\ \smallskip $\Pi.\Dec$};
		\node [rectangle,draw,text width=0.65in] (DEC-KKW) at (7,3) {Keyless \\ Key Unwrap \\ \smallskip $\Sigma.\Unwrap$};
		\draw[->] (FS) -- node[above] {$\ctxt$} (DEC-SYMENC);
		\draw[->] (FS) -- node[below] {$\wrappedkey$} (DEC-KKW);
		\draw[->] (DEC-KKW) -- node[left] {$\key$} (DEC-SYMENC);
		\draw[->] (DEC-SYMENC) -- node[above] {$\msg$} (9,1);
		\draw[dashed] (5.9,0.3) -- (8.1,0.3) -- (8.1,3.7) -- (5.9,3.7) -- (5.9,0.3);
		\node [below] at (7,3.8) {DBKE decryption};
		\end{tikzpicture}
	}
\end{center}
\vspace{-1em}
\caption{Architectural diagram for generic construction of a difficulty-based keyless encryption scheme $\Gamma=\Gamma[\Pi,\Sigma]$ from a difficulty-based keyless key wrap scheme $\Sigma$ and a symmetric encryption scheme $\Pi$.}
\label{fig:DBKE}
\vspace{-1em}
\end{figure}

Our main construction of DBKE, as shown in \Cref{fig:DBKE}, generically combines a traditional symmetric encryption scheme with a ``keyless key wrap'', which is difficulty-based form of key wrapping: there is no ``master key'' wrapping the session key, instead the session key is recovered via some difficulty-based operation. In this subsection we present the generic building blocks we use to construct DBKE. In \Cref{section:KKWInstantiation} we show how to instantiate the keyless key wrap.

\begin{definition}[Symmetric encryption scheme]\label{def:symenc}
A \emph{symmetric encryption scheme} $\Pi$ with secret key space $\KeySp = \bit^\sklen$ and message space $\MsgSp$ consists of two algorithms:
\begin{itemize}
\item $\Pi.\Enc(\key, \msg) \tor \ctxt$: A (probabilistic) encryption algorithm that takes as input key $\key \in \KeySp$ and message $\msg \in \MsgSp$ and outputs ciphertext $\ctxt$.
\item $\Pi.\Dec(\key, \ctxt) \to \msg'$: A deterministic decryption algorithm that takes as input key $\key \in \KeySp$ and ciphertext $\ctxt$ and outputs message $\msg' \in \MsgSp$ or an error $\bot \not\in\MsgSp$.
\end{itemize}
\end{definition}

Correctness is defined in the obvious way.  For our purposes, a sufficient security property will be one-time semantic security, in the form of ciphertext indistinguishability.  As above, we will not need to consider security notions incorporating chosen plaintext or chosen ciphertext attacks, since our system will use a key only once.  The security experiment $\Exp{ind}{\Pi}(\Adversary)$ for an adversary $\Adversary$ trying to break indistinguishability of symmetric encryption scheme $\Pi$ is shown in \Cref{fig:sec:ind}.  We define the advantage of such an adversary in the security experiment as
\shortlongeqn{\Adv{ind}{\Pi}(\Adversary) = \left| 2 \cdot \Pr\left[ \Exp{ind}{\Pi}(\Adversary) \Rightarrow \true \right] - 1 \right|}

The second building block for our construction is a keyless key wrap scheme.

\begin{definition}[Keyless key wrap scheme]\label{def:kkw}
A \emph{keyless key wrap scheme} $\Sigma$ for a key space $\KeySp = \bit^\sklen$ with maximum difficulty level $\maxdifficulty \in \bbN$ consists of two algorithms:
\begin{itemize}
\item $\Sigma.\Wrap(\difficulty) \tor (\key, \wrappedkey)$: A (probabilistic) key wrapping algorithm that takes as input difficulty level $\difficulty \le \maxdifficulty$ and outputs key $\key \in \KeySp$ and wrapped key $\wrappedkey$.
\item $\Sigma.\Unwrap(\wrappedkey) \to \key'$: A deterministic key unwrapping algorithm that takes as input wrapped key $\wrappedkey$ and outputs key $\key \in \KeySp$ or an error $\bot\not\in\KeySp$.
\end{itemize}
\end{definition}

Correctness, again, is defined in the natural way: applying $\Unwrap$ to a wrapped key $\wrappedkey$ output by $\Wrap$ should yield, with certainty, the same key $\key$ as originally output by $\Wrap$.

The desirable security property for a keyless key wrap scheme will be indistinguishability of keys: given the wrapped key, can the adversary learn anything about the key within it?  The key indistinguishability security experiment $\Exp{key-ind}{\Sigma,\difficulty}$ for an adversary $\Adversary$ trying to break key indistinguishability of a keyless key wrap scheme at difficulty level $\difficulty$ is shown in \Cref{fig:sec:key-ind}.  We define the advantage of such an adversary in the security experiment as
\shortlongeqn{\Adv{key-ind}{\Sigma,\difficulty}(\Adversary) = \left| 2 \cdot \Pr\left[ \Exp{key-ind}{\Sigma,\difficulty}(\Adversary) \Rightarrow \true \right] - 1 \right|}
As with DBKE security, useful forms of $\Adv{key-ind}{\Sigma,\difficulty}(\Adversary)$ will relate the amount of work done by the adversary, the difficulty level, and the adversary's success probability.

As noted above, we generically construct a difficulty-based keyless encryption scheme by combining a traditional symmetric encryption scheme with a keyless key wrap scheme, as outlined in \Cref{fig:DBKE}.  
Let $\Pi$ be a symmetric encryption scheme with key space $\KeySp = \bit^\sklen$, and let $\Sigma$ be a keyless key wrap scheme for key space $\KeySp$ with maximum difficulty level $\maxdifficulty$.  Construct the difficulty-based keyless encryption scheme $\Gamma[\Pi,\Sigma]$ from $\Pi$ and $\Sigma$ as outlined in \Cref{fig:DBKE} and specified in \Cref{fig:DBKEConstruction}.

\begin{figure}[t]
\centering
\fbox{
\scalebox{0.9}{
\begin{minipage}[t]{0.4\textwidth}
\underline{$\Gamma.\Enc(\difficulty,\msg)$:}
\begin{enumerate}
\item $(\key, \wrappedkey) \getsr \Sigma.\Wrap(\difficulty)$
\item $\ctxt \getsr \Pi.\Enc(\key, \msg)$
\item return $(\ctxt, \wrappedkey)$
\end{enumerate}
\end{minipage}
~
\begin{minipage}[t]{0.4\textwidth}
\underline{$\Gamma.\Dec((\ctxt, \wrappedkey))$:}
\begin{enumerate}
\item $\key' \gets \Sigma.\Unwrap(\wrappedkey)$
\item $\msg' \gets \Pi.\Dec(\key', \ctxt')$
\item return $\msg'$
\end{enumerate}
\end{minipage}
}
}
\caption{Generic construction of a difficulty-based keyless encryption scheme $\Gamma=\Gamma[\Pi,\Sigma]$ from a difficulty-based keyless key wrap scheme $\Sigma$ and a symmetric encryption scheme $\Pi$.}
\label{fig:DBKEConstruction}
\vspace{-1em}
\end{figure}

Our DBKE scheme $\Gamma$ is secure, in the sense of \Cref{fig:sec:db-ind}, under the assumption that the building blocks are secure.  The proof follows from a straightforward game-hopping argument; 
\iffullversion
details are presented in \Cref{appendix:Proofs:generic-DBKE}.
\else
details are omitted due to space constraints and appear in the full version.\footnote{\url{https://arxiv.org/abs/2009.00086}}
\fi

\begin{theorem}\label{thm:generic}
If $\Sigma$ is a key-indistinguishable difficulty-based keyless key wrap scheme, and $\Pi$ is a one-time indistinguishable symmetric encryption scheme, then $\Gamma=\Gamma[\Pi,\Sigma]$ is a secure difficulty-based keyless encryption scheme.  More precisely, let $\difficulty \le \maxdifficulty$ and let $\Adversary$ be a probabilistic algorithm. Then there exists algorithms $\Bdversary_1$ and $\Bdversary_2$, such that
\shortlongeqn{\Adv{db-ind}{\Gamma,\difficulty}(\Adversary) \le 2 \cdot \Adv{key-ind}{\Sigma,\difficulty}(\Bdversary_1^\Adversary) + \Adv{ind}{\Pi}(\Bdversary_2^\Adversary)}
Moreover, $\Bdversary_1^\Adversary$ and $\Bdversary_2^\Adversary$ have about the same runtime as $\Adversary$.
\end{theorem}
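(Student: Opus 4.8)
The plan is to use a standard two-step game-hopping argument, interpolating between the real $\Exp{db-ind}{\Gamma,\difficulty}$ experiment (where the wrapped key $\wrappedkey$ encapsulates the \emph{same} key $\key$ used to encrypt $\msg$) and an idealized experiment in which the symmetric encryption uses a key that is independent of the wrapped key handed to the adversary. Recall that in $\Gamma.\Enc$ the ciphertext is the pair $(\ctxt, \wrappedkey)$ where $(\key,\wrappedkey) \getsr \Sigma.\Wrap(\difficulty)$ and $\ctxt \getsr \Pi.\Enc(\key,\msg_b)$. The key observation is that the adversary sees $\wrappedkey$ explicitly, so we cannot simply invoke $\Pi$'s one-time indistinguishability directly: the wrapped key is correlated with the encryption key. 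The role of the first hop is precisely to break this correlation by appealing to the key-indistinguishability of $\Sigma$.

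First I would define Game~0 as the real experiment $\Exp{db-ind}{\Gamma,\difficulty}(\Adversary)$, and Game~1 as the modified experiment in which we draw $(\key_0,\wrappedkey)\getsr\Sigma.\Wrap(\difficulty)$ but encrypt under a \emph{fresh} independent key $\key_1 \getsr \KeySp$, i.e.\ $\ctxt \getsr \Pi.\Enc(\key_1,\msg_b)$, while still handing the adversary the original $\wrappedkey$. To bound $|\Pr[\text{Game 0}] - \Pr[\text{Game 1}]|$ I would build a reduction $\Bdversary_1$ against $\Exp{key-ind}{\Sigma,\difficulty}$: on receiving $(\wrappedkey, \key_0, \key_1)$ from its challenger, $\Bdversary_1$ internally runs $\Adversary$ to get $(\msg_0,\msg_1)$, samples its own bit $b$, and encrypts $\msg_b$ under the challenge-supplied key; it then forwards $(\ctxt,\wrappedkey)$ to $\Adversary$ and outputs $1$ iff $\Adversary$'s guess matches $b$. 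When the key-ind challenger's hidden bit selects $\key_0$ (the genuinely wrapped key) this perfectly simulates Game~0, and when it selects $\key_1$ (the independent key) it simulates Game~1; hence the distinguishing gap is bounded by $\Adv{key-ind}{\Sigma,\difficulty}(\Bdversary_1^\Adversary)$. The factor of $2$ in the theorem arises from converting between the ``guessing $b$'' advantage phrasing of the db-ind game and the ``guessing the challenger's bit'' phrasing of the key-ind game—a routine bookkeeping step relating $\Pr[\cdot]$-differences to the normalized advantages.

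In Game~1 the encryption key $\key_1$ is now information-theoretically independent of everything else the adversary sees, so the second hop reduces directly to $\Pi$'s one-time indistinguishability. I would build $\Bdversary_2$ which runs $\Adversary$ to obtain $(\msg_0,\msg_1)$, forwards these as its own challenge messages to $\Exp{ind}{\Pi}$, receives a challenge ciphertext $\ctxt$ encrypted under the (hidden, fresh) key, independently runs $\Sigma.\Wrap(\difficulty)$ itself to generate an unrelated $\wrappedkey$, and feeds $(\ctxt,\wrappedkey)$ to $\Adversary$, relaying $\Adversary$'s final bit. This perfectly simulates Game~1 because in that game the wrap and the encryption key are independent, exactly as in $\Bdversary_2$'s simulation; therefore $\Adversary$'s advantage in Game~1 is at most $\Adv{ind}{\Pi}(\Bdversary_2^\Adversary)$. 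Combining the two bounds via the triangle inequality yields
\begin{equation*}
\Adv{db-ind}{\Gamma,\difficulty}(\Adversary) \le 2 \cdot \Adv{key-ind}{\Sigma,\difficulty}(\Bdversary_1^\Adversary) + \Adv{ind}{\Pi}(\Bdversary_2^\Adversary),
\end{equation*}
as claimed. Since each reduction invokes $\Adversary$ once and performs only one additional $\Wrap$ or $\Enc$ call, the runtimes of $\Bdversary_1^\Adversary$ and $\Bdversary_2^\Adversary$ are essentially that of $\Adversary$.

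The main obstacle—though it is conceptual rather than technical—is correctly handling the exposed $\wrappedkey$ in the first reduction: one must ensure the key-ind game actually supplies $\Bdversary_1$ with \emph{both} candidate keys together with the single wrapped key, so that $\Bdversary_1$ can encrypt under whichever key the challenger has fixed while presenting the consistent $\wrappedkey$. The definition of $\Exp{key-ind}{\Sigma,\difficulty}$ in \Cref{fig:sec:key-ind} is tailored to provide exactly this interface, so the simulation goes through cleanly. The remaining work—the factor-of-$2$ advantage algebra and verifying perfect simulation in each game—is routine.
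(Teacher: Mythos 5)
Your proposal is correct and follows essentially the same route as the paper's own proof: the identical game hop (real experiment versus encrypting under a key independent of the exposed $\wrappedkey$), the same reduction $\Bdversary_1$ exploiting the $(\wrappedkey,k_0,k_1)$ interface of $\Exp{key-ind}{\Sigma,\difficulty}$, the same reduction $\Bdversary_2$ that generates its own wrap and relays $\Adversary$'s messages and bit, and the same factor-of-$2$ advantage algebra. No gaps to report.
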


\subsection{Hash-based construction of difficulty-based keyless key wrap}\label{section:KKWInstantiation}

We now show how to construct our difficulty-based keyless key wrap using a hash-based puzzle.  The idea is simple: a random seed $\seed$ is chosen, and the key and a checksum of the seed are derived from the seed using hash functions.  The wrapped key consists of the checksum of the seed and the seed with \emph{some of its bits removed}; the number of bits removed corresponds to the difficulty of the puzzle. This is similar to the sub-puzzle construction of Juels and Brainard \cite{juels1999client} or partial inversion proof of work by Jakobsson and Juels \cite{JJ99}. Such a puzzle is solved by trying all possibilities for the missing bits, in any order and with or without using parallelization.

In particular, let $\sklen \in \bbN$, and let $H_1, H_2 : \bit^\sklen \to \bit^\sklen$ be independent hash functions. Define keyless key wrap scheme $P=P[H_1,H_2]$ as in \Cref{fig:kkwconstruction} (left).  The notation $\seed[\sklen-\difficulty:\sklen]$ on line 2 of $P.\Wrap$ denotes taking the substring of $\seed$ corresponding to indices $\sklen-\difficulty$ up to $\sklen$, removing the first $\difficulty$ bits of $\seed$.

\begin{figure}[b]
\centering
\fbox{
\scalebox{0.9}{
\begin{tabular}{cc|c}
\begin{minipage}[t]{0.25\textwidth}
\underline{$P.\Wrap(d)$:}
\begin{enumerate}
\item $\seed \getsr \bit^\sklen$
\item $\partialseed \gets \seed[\sklen - \difficulty : \sklen]$
\item $\checksum \gets H_1(\seed)$
\item $\key \gets H_2(\seed)$
\item $\wrappedkey \gets (\checksum, \partialseed)$
\item return $(\key, \wrappedkey)$
\end{enumerate}
\end{minipage}
&
\begin{minipage}[t]{0.3\textwidth}
\underline{$P.\Unwrap(\wrappedkey=(\checksum,\partialseed))$:}
\begin{enumerate}
\item $\difficulty \gets \sklen - |\partialseed|$
\item for $i \in \{0,1\}^\difficulty$:
\item \quad\quad $\seed' \gets i \| \partialseed$
\item \quad\quad $\checksum' \gets H_1(\seed')$
\item \quad\quad if $\checksum' = \checksum$:
\item \quad\quad\quad\quad $\key \leftarrow H_2(\seed')$
\item \quad\quad\quad\quad return $\key$
\item return $\bot$
\end{enumerate}
\end{minipage}
&~
\begin{minipage}[t]{0.35\textwidth}
\underline{$\Gamma[\Pi,P].\Degrade(\hat{\ctxt}, \difficulty')$:}
\begin{enumerate}
\item parse $\hat{\ctxt}$ as $(\ctxt, \wrappedkey = (\checksum, \partialseed))$
\item $\difficulty \gets \sklen - |\partialseed|$
\item abort if $\difficulty' < \difficulty$
\item $\partialseed' \gets \partialseed[\difficulty' - \difficulty : |\partialseed|]$
\item $\wrappedkey' \gets (\checksum, \partialseed')$
\item return $(\ctxt, \wrappedkey')$
\end{enumerate}
\end{minipage}
\end{tabular}
}
}
\caption{\textit{Left:} Construction of a hash-based keyless key wrap scheme $P=P[H_1,H_2]$ from hash functions $H_1,H_2$.  \textit{Right:} Degradation algorithm for DBKE $\Gamma=\Gamma[\Pi,P]$ constructed using generic construction $\Gamma$ of \Cref{fig:DBKEConstruction} using hash-based keyless key wrap scheme $P$ of left.}
\label{fig:kkwconstruction}
\label{fig:degrade}
\vspace{-1em}
\end{figure}

The following theorem shows the key indistinguishability security of our hash-based keyless key wrap scheme $P$ in the random oracle model.  The proof consists of a query counting argument in the random oracle model; 
\iffullversion
details are presented in \Cref{appendix:Proofs:kkw}.
\else
details are omitted due to space constraints and appear in the full version.
\fi

\begin{theorem}\label{thm:kkw}
Let $H_1$ and $H_2$ be random oracles.  Let $\sklen \in \bbN$ and let $\difficulty \le \sklen$.  Let $P=P[H_1,H_2]$ be the keyless key wrap scheme from \Cref{fig:kkwconstruction} (left).  Let $\Adversary$ be an adversary in key indistinguishability experiment against $P$ which makes $q_1$ and $q_2$ distinct queries to its $H_1$ and $H_2$ random oracles, respectively.  Then
\shortlongeqn{\Adv{key-ind}{P,\difficulty}(\Adversary) \le \frac{q_1}{2^{\difficulty-1}} + \frac{2}{2^\difficulty - q_1}}
\end{theorem}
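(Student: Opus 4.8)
The plan is to work entirely in the random oracle model and to reduce key indistinguishability to a single bad event: the adversary querying a random oracle at the true seed $\seed$. The adversary's view consists of the partial seed $\partialseed$, which pins $\seed$ down to exactly one of the $2^\difficulty$ candidate strings $i \| \partialseed$ with $i \in \bit^\difficulty$, together with the checksum $\checksum = H_1(\seed)$ and the challenge key; the key it must learn is $\key = H_2(\seed)$. The first observation is that, since $H_2$ is a random oracle, $H_2(\seed)$ is uniformly distributed and independent of the rest of the adversary's view \emph{unless} the adversary actually queries $H_2$ at the point $\seed$. Hence, conditioned on that query never being made, the real key is distributed identically to an independent uniform key, and the two worlds of the experiment are indistinguishable. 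By the usual identical-until-bad argument this bounds $\Adv{key-ind}{P,\difficulty}(\Adversary)$ by (a constant multiple, coming from the $|2\Pr[\cdot]-1|$ normalization of the advantage, of) the probability that $\Adversary$ ever queries a random oracle at $\seed$.

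Next I would isolate how the adversary can learn anything about which of the $2^\difficulty$ candidates is the real seed. A preliminary game hop reprograms $H_1(\seed)$ to be a fresh independent uniform string; this hop is free except for the probability that $\Adversary$ queries $H_1$ at $\seed$, which is the source of the $q_1$-dependent term. After reprogramming, the checksum $\checksum$ carries no information about the high $\difficulty$ bits of $\seed$ beyond ruling out whichever candidates the adversary has already tested against $\checksum$ through its $H_1$ queries. The core of the argument is then a query-counting bound over the candidate set: each of the $q_1$ distinct $H_1$ queries tests at most one candidate against $\checksum$, so the probability of landing on $\seed$ this way is at most $q_1/2^\difficulty$; conditioned on not having done so, $\seed$ remains essentially uniform over the at least $2^\difficulty - q_1$ untested candidates, and locating it then succeeds only with probability on the order of $1/(2^\difficulty - q_1)$. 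Collecting these contributions yields the stated bound $\tfrac{q_1}{2^{\difficulty-1}} + \tfrac{2}{2^\difficulty - q_1}$.

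I expect the main obstacle to be handling the \emph{adaptivity} of the adversary correctly in the counting step: the candidate tested by the $(j+1)$-st query may depend on all previously observed oracle answers, so I cannot simply treat the queried candidates as a fixed set. The clean way to dispatch this is to argue that, in the reprogrammed game, every $H_1$ answer is an independent uniform string revealing only the single bit ``this candidate equals $\seed$ or not,'' so that conditioned on the transcript seen so far the seed stays uniform over the untested survivors; a short induction then gives the per-query hitting probability. The genuinely delicate point hiding inside the residual $\tfrac{2}{2^\difficulty - q_1}$ term is showing that the adversary's $H_2$ queries provide no additional leverage for finding $\seed$ beyond this guessing bound once the checksum information has been accounted for—this is where the argument must be pinned down most carefully. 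Two minor technical issues also need to be absorbed: spurious checksum collisions, where some $\seed' \ne \seed$ happens to satisfy $H_1(\seed') = \checksum$, which occur with only negligible probability $O(q_1 2^{-\sklen})$; and keeping the reprogramming of $H_1(\seed)$ consistent with the lazy sampling of both random oracles throughout the simulation.
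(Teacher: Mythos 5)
Your overall skeleton---random-oracle query counting around the bad event ``\,$\Adversary$ queries an oracle at the true seed $\seed$\,''---matches the paper's proof, but the proposal as written does not establish the stated bound, and the gap is exactly the point you yourself flag as ``genuinely delicate'' and leave open. An identical-until-bad argument bounds $\Adv{key-ind}{P,\difficulty}(\Adversary)$ by (twice) the probability of the bad event. But the bad event includes the $H_2$ queries, and each distinct $H_2$ query at a candidate $i\,\|\,\partialseed$ tests one surviving candidate: the probability that some $H_2$ query lands on $\seed$, even conditioned on no $H_1$ query having done so, is of order $q_2/(2^\difficulty-q_1)$, not $1/(2^\difficulty-q_1)$. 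So the steps you actually justify yield only a bound of the shape $q_1/2^{\difficulty-1}+2q_2/(2^\difficulty-q_1)$. The $q_2$-free term in the theorem cannot be reached by pure identical-until-bad, because that technique abandons the analysis (bounds the conditional win probability by $1$) as soon as the bad event occurs; your sentence ``locating it then succeeds only with probability on the order of $1/(2^\difficulty-q_1)$'' is true of a single guess, not of an adversary with $q_2$ oracle queries.

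The paper closes this hole with a mechanism your framework does not contain: a total-probability decomposition over the events $E_i$ ($\Adversary$ queries $H_i$ at $\seed$), in which the contribution of $E_2\land\lnot E_1$ is the product $\Pr[W\mid E_2\land\lnot E_1]\cdot\Pr[E_2\land\lnot E_1]\le \frac{1}{q_2}\cdot\frac{q_2}{2^\difficulty-q_1}$. That is, the paper continues to argue \emph{after} the bad event happens, claiming that the adversary wins with conditional probability only $1/q_2$ because it cannot tell which of its $H_2$ queries hit the seed; it is precisely this $1/q_2$ that cancels the $q_2$ in the hitting probability and removes $q_2$ from the final bound. Your instinct that this step ``must be pinned down most carefully'' is well founded---an adversary can compare each $H_2$ answer against the challenge keys it holds, which is in tension with the claim that a hit goes unrecognized---but whatever one's view of that step, your proposal offers no substitute for it: nothing in your outline converts the $q_2/(2^\difficulty-q_1)$ hitting probability into the claimed $2/(2^\difficulty-q_1)$. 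As written, the proposal proves only the weaker, $q_2$-dependent bound.
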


\paragraph{Puzzle granularity.}
The partial pre-image puzzle construction used in \Cref{fig:kkwconstruction} does not allow for fine-grained control of difficulty: removing each additional bit increases the expected computational cost by a factor of 2.  Higher granularity can be achieved similar to how the puzzle difficulty in Bitcoin is set, by giving a hint that narrows the range of data from $2^d$ to some smaller subset.

\subsection{Puzzle Degradation}\label{section:PuzzleDegradation}

We now introduce an additional feature of difficulty-based keyless encryption that emerges naturally from our hash-based keyless key wrap construction: \emph{puzzle degradation}.  Abstractly, puzzle degradation is a process that takes a DBKE ciphertext and increases the difficulty of decrypting it, preferably without needing to decrypt and then re-encrypt at a higher difficulty level.  

In the context of the ArchiveSafe long-term archiving system, this may be used to gradually increase the difficulty of files that have not been accessed for a certain period of time.  For example, a monthly maintenance process could apply degradation to stored files to gradually increase the cost (to both an attacker and an honest party) of accessing increasingly older files.

The DBKE system $\Delta$ from \Cref{def:dbke} is augmented with the algorithm:
\begin{itemize}
\item $\Delta.\Degrade(\ctxt, \difficulty') \tor \ctxt'$: A (possibly probabilistic) algorithm that takes as input ciphertext $\ctxt$ and target difficulty level $\difficulty' \le \maxdifficulty$, and outputs updated ciphertext $\ctxt'$.
\end{itemize}

Correctness is extended to demand that a ciphertext output by $\Delta.\Enc$ then degraded any number of times is still correctly decrypted by $\Delta.\Dec$ (although decryption may take longer).  

Security with the degraded algorithm included should mean, intuitively, that a ciphertext degraded any number of times can be decrypted only using the required amount of work at the new difficulty level.

We capture both correctness and security of degradation formally by demanding that, for all $\difficulty \le \difficulty' \le \maxdifficulty$ and all $\msg \in \MsgSp$, we have that 
\shortlongeqn[;]{\Delta.\Enc(\difficulty', \msg) \equiv \Delta.\Degrade(\difficulty', \Delta.\Enc(\difficulty, \msg))}
in other words: the distribution of ciphertexts produced by encrypting at difficulty $\difficulty'$ is identical to the distribution of ciphertexts produced by encrypting at difficulty $\difficulty$ and then degrading to difficulty $\difficulty'$.

We can achieve degradation in DBKE $\Gamma=\Gamma[\Pi,P]$ constructed from our hash-based keyless key wrap $P$ in a trivial way: by removing $(\difficulty'-\difficulty)$ more bits from the puzzle hint $\partialseed$.  This clearly requires no decryption and re-encryption, only a constant-time edit to the metadata stored containing the wrapped key.  The procedure $\Gamma.\Degrade$ is stated in \Cref{fig:degrade} (\textit{right}).  Degraded ciphertexts are identically distributed to ciphertexts freshly generated at the target difficulty level, as removing additional bits of the partial seed $\partialseed$ is associative. An adversary who possess a copy of the metadata from an earlier version of the archive prior to degradation can solve puzzles and decrypt at the earlier, non-degraded difficulty level.

\subsection{Additional Considerations}
\label{section:OutsourceSolving}
\label{section:Combining}

\lightparagraph{Outsourcing Puzzle Solving.}
The generic DBKE construction $\Gamma$ of \Cref{fig:DBKEConstruction} allows the key unwrapping and ciphertext decryption to be done separately, so the expensive key unwrapping could be outsourced to a cloud server.  In the example of the hash-based keyless key wrap scheme $P$ of \Cref{fig:kkwconstruction}, the user could give the wrapped key $\wrappedkey=(\checksum,\partialseed)$ to the cloud server who unwraps and returns the key $\key$, which the user then locally uses to decrypt the ciphertext $\ctxt$.  

This does mean that the cloud server learns the encryption key $\key$.  
However, this can be avoided with the following adaption to the construction $P$ of \Cref{fig:kkwconstruction}.
During wrapping, the algorithm generates an additional \emph{salt} value $\salt \getsr \{0,1\}^\sklen$ and computes $\key \gets H_2(\seed \| \salt)$; $\salt$ is stored in the wrapped key $\wrappedkey$.  When outsourcing the unwrapping to the cloud server, the user only sends $\checksum$ and $\partialseed$, but not $\salt$.  The cloud server is still able to use the checksum $\checksum$ with the partial seed $\partialseed$ to recover the full seed $\seed$, but lacks the salt $\salt$ and thus the cloud server alone cannot compute the decryption key $\key$.  \Cref{thm:kkw} still applies to this adaptation.

\lightparagraph{Combining Keyless and Keyed Encryption.}
As previously mentioned, our keyless encryption approach can (and should) be used in conjunction with traditional keyed encryption mechanisms using a different set of keys. Traditional keyed encryption gives honest parties a (conjecturally exponential) work factor advantage over adversaries if keys remain uncompromised, while keyless encryption slows adversaries if the traditional encryption keys are compromised. The two schemes can be layered in one of two ways: first applying keyless encryption $\DBKE$ and encrypting the result using keyed symmetric  encryption $\Sym$ (i.e., $c \gets \Sym.\Enc(k, \DBKE.\Enc(d, m))$) or in the order order, with keyless encryption on the outer layer (i.e., $c \gets \DBKE.\Enc(d, \Sym.\Enc(k, m))$).  Either approach yields robust confidentiality, but we recommend the latter method as it facilitates the puzzle degradation process described in \Cref{section:PuzzleDegradation}.

\section{Evaluation}\label{section:Evaluation}
We evaluate ArchiveSafe by measuring its performance against other systems through real life experiment. The goals of the experiment are to: (1) measure the overhead ArchiveSafe introduces on adversaries and honest users, and (2) verify that puzzle solving difficulty scale according to the theoretical system design.

\subsection{Prototype Implementation}
To run the evaluation experiment, we implemented a prototype of ArchiveSafe.  In terms of instantiating the difficulty-based keyless encryption using the generic construction from \Cref{section:DBKEGenericConstruction}, our proof-of-concept uses AES-128 in CBC mode for the symmetric encryption scheme.  The hash functions $H_1$ and $H_2$ in the hash-based keyless key wrap scheme are both instantiated with Argon2id \cite{argon2} with a prefix byte acting as a domain separator between $H_1$ and $H_2$, with the following parameters: parallelism level: 8; memory: 102,400\,KiB; iterations: 2; output length: 128 bits. We did not parallelize puzzle solving in $\Unwrap$ to avoid locking other system operations, but it is easily parallelized.

The ArchiveSafe prototype is implemented as a Linux Filesystem in Userspace (FUSE) using a Python toolkit\footnote{\url{https://github.com/skorokithakis/python-fuse-sample}} to simplify implementation.  Our Python FUSE driver relies on the OpenSSL library for encryption and decryption, and Ubuntu's \texttt{argon2} package.  In a real deployment in the context of a filesystem, ArchiveSafe would be implemented as a kernel module, likely written in C, for improved performance and reliability.

Our prototype has a tuneable difficulty level, which we label in this section as D1, D2, D3, etc.  Difficulty D$x$ corresponds to hash-based keyless key wrap scheme $P$ of \Cref{fig:kkwconstruction} with difficulty parameter $d=4x$; in other words, D1 removes 4 bits of the seed, D2 removes 8 bits of the seed, etc. We chose a 4-bit step between difficulty levels to focus on how system behaviour scales across difficulty levels; finer gradations could be chosen by users.

\subsection{Experimental Setup}
The experiment measures ArchiveSafe's performance at three difficulty levels (D1, D2, D3) compared to an unencrypted file system (denoted UN) and Linux's built-in folder encryption using eCryptfs\footnote{https://www.ecryptfs.org/} (denoted FE) and disk encryption (denoted DE) on read and write tasks at different file sizes.  When running the ArchiveSafe experiments, the ArchiveSafe FUSE driver was writing its files to an unencrypted file system.

\lightparagraph{Measurements.}
For each storage system being evaluated, we measure \emph{read} and \emph{write} times for files of sizes 1\,KB, 100\,KB, 1\,MB, 10\,MB, and 100\,MB. Performance is measured at the application level, from the time the file is opened until the time the read/write operation is completed.  For folder and disk encryption, this includes the filesystem's encryption operations.  For ArchiveSafe, we instrumented the driver to record the total time as well times for different sub-tasks (encryption, puzzle solving, decryption, file system I/O).

\lightparagraph{Test environment.}
Measurements were performed on a single-user Linux machine with no other processes running. The computer was a MacBook Pro running Ubuntu Linux 18.04 LTS with an 4-core Intel Core i7-4770HQ processor with base frequency 2.2\,GHz, bursting to 3.4\,GHz.  The computer had 16\,GiB of RAM.  The hard drive was a 256\,GiB solid state drive with 512-byte logical sectors and 4096-byte physical sectors.  The disk encryption was done using Linux Unified Key Setup system version 2.0, and folder encryption was done using the Enterprise Cryptographic Filesystem (eCryptfs) version 5.3.

\lightparagraph{Execution.}
For each storage system and file size, we performed many repetitions of the following tasks.  A file was created with randomly generated alphanumeric characters using a non-cryptographic random number generator. Read and write operations were measured as indicated above.
For file sizes of 1\,KB, 100\,KB, 1\,MB, and 10\,MB, we collected data for 1000 writes and reads; for 100\,MB files, we ran 200 writes and reads, due to extensive time of operations at this size.

\subsection{Results}

\begin{table}[t]
\centering
\scalebox{0.8}{
\begin{tabular}{lrrrrrr|rrrrrr}
  \toprule
  \multirow{2} {*} {\textbf{File system}} &
  \multicolumn {6}{c|}{\textbf{Read}} &   \multicolumn {5}{c}{\textbf{Write}} \\
    & 1\,KB & 100\,KB & 1\,MB & 10\,MB & 100\,MB & & 1\,KB & 100\,KB & 1\,MB & 10\,MB & 100\,MB \\
  \midrule
    Unencrypted (UN) & 0.526 & 0.550 & 1.70 & 10.1 & 110 & & 0.07 & 0.25 & 0.85 & 6.76 & 97.82\\
    Disk Encryption (DE) & 0.737 & 0.924 & 3.15 & 10.5 & 160 & & 0.08 & 0.25 & 0.83 & 6.63 & 97.97\\
    Folder Encryption (FE) & 0.737 & 0.961 & 3.42 & 10.9 & 190 & & 0.12 & 0.50 & 3.31 & 29.07 & 319.88\\
\midrule
    ArchiveSafe D1 & 630 & 630 & 630 & 650 & 860 & & 141.05 & 141.67 & 146.09 & 221.73 & 848.30\\
    ArchiveSafe D2 & 7070 & 7080 & 7310 & 7180 & 7290 & & 141.25 & 141.43 & 145.08 & 223.50 & 847.02\\
	ArchiveSafe D3 & ~112140 & ~111760 & ~107390 & ~114530 & ~107630 & & ~141.01 & ~140.98 & ~145.74 & ~222.40 & ~846.06\\
  \bottomrule
\end{tabular}
}
\smallskip
\caption{Average read and write times in milliseconds}
\label{tab:UNDED1D2D3readavg}
\vspace{-1em}
\end{table}

\begin{table}[t]
\begin{minipage}{0.60\textwidth}
\centering
\scalebox{0.85}{
\begin{tabular}{llrrrrr}
  \toprule
    Diff.& & 1\,KB & 100\,KB & 1\,MB & 10\,MB & 100\,MB \\
  \midrule
     D1 & Puzzle Solve & 510 & 510 & 510 & 510 & 500\\
	     & Decryption & 5.42 & 5.71 & 7.25 & 20 & 150\\
	     & Other & 0.387 & 0.373 & 0.378 & 0.384 & 0.363\\
	     \midrule
     D2 & Puzzle Solve & 6960 & 6980 & 7210 & 7050 & 6930\\
	     & Decryption & 5.58 & 6.12 & 7.89 & 20 & 140\\
	     & Other & 0.357 & 0.373 & 0.376 & 0.374 & 0.335\\
	     \midrule
     D3 & Puzzle Solve & ~112040 & ~111730 & ~107280 & ~114410 & ~107270\\
	     & Decryption & 5.56 & 5.94 & 7.96 & 20 & 140\\
	     & Other & 1.075 & 1.216 & 0.971 & 1.195 & 1.045\\
  \bottomrule
\end{tabular}
}
\smallskip
\caption{Read sub-tasks average times in milliseconds}
\label{tab:D1D2D3detavg}
\end{minipage}
~
\begin{minipage}{0.38\textwidth}
\centering
\scalebox{0.85}{
\begin{tikzpicture}
    \begin{axis}[ylabel={Time in ms},width=0.99\textwidth,symbolic x coords={0, D1,D2,D3},
        xmin={[normalized]0},
        xmax={[normalized]4},
    xtick=data,ymin=0, ymax=400000, ymode=log,
    height=2.13in
    ]
        \addplot [ybar=7pt, bar width=8pt, error bars/.cd, y dir=both, y explicit]
            table [col sep=comma, x=File, y=Mean, y error=STDev]{puzzle.csv};
    \end{axis}
\end{tikzpicture}
}
\captionof{figure}{Puzzle solving time in milliseconds (average, standard deviation)}
\label{fig:Puzzleavg}
\end{minipage}
\vspace{-2em}
\end{table} 

\Cref{tab:UNDED1D2D3readavg} shows average read and write times for the file systems under consideration at different file sizes.  
Since read operations in the ArchiveSafe system become increasingly expensive with difficulty, we show in \Cref{tab:D1D2D3detavg} the average time of sub-tasks of ArchiveSafe read operations at different file sizes and difficulties: the puzzle solving time (which should scale with puzzle difficulty), the system file read time plus decryption time (which should scale with file size), and the overhead from other file system driver operations (which includes puzzle read and system file open times).  As the partial pre-image puzzle used in ArchiveSafe leads to highly variable solving times, \Cref{fig:Puzzleavg} shows the average time and standard deviation for puzzle solving at difficulties D1, D2, and D3.

\subsection{Discussion}\label{section:Discussion}
The results show consistent behaviour across different file sizes. The larger files consumed more time in decrypting and reading. We also observed that the time consumed is roughly the same for smaller file sizes (1\,KB and 100\,KB) where operation cost is dominated by overhead.

As expected, the read speeds decrease with the difficulty level because the system must solve the puzzle before reading the file and the puzzle solving effort scales with the difficulty level.  As per \Cref{tab:D1D2D3detavg}, puzzle solve times on average scale by a factor of 13.6--14.1$\times$ between D1 and D2 and a factor of 14.9--16.2$\times$ between D2 and D3, roughly in line with the theoretical scaling factor of 16$\times$.

Evaluating the overhead added by ArchiveSafe for write operations, we see in \Cref{tab:UNDED1D2D3readavg} that ArchiveSafe incurs a baseline overhead related to setting up the puzzle (which involves 2 Argon2 calls), then scales with the file size due to the cost of AES encryption and writing.   Note that ArchiveSafe uses a different encryption library (user-space calls to OpenSSL) compared with disk and file encryption (kernel encryption via dm-crypt), so symmetric encryption/decryption performance is not directly comparable, but we see similar scaling.

The short summary of performance is that ArchiveSafe adds a 140--520\,ms overhead when writing a file, and a customizable overhead when reading a file, ranging from 510\,ms at difficulty D1, 7 seconds at D2, or 110 seconds at D3.  But recall that adding computational overhead at read time is exactly the purpose of ArchiveSafe!  What an acceptable difficulty level---and hence acceptable computational overhead at read time for honest users---is a policy choice by the system administrator. As noted earlier, choosing the difficulty level depends on the tolerable cost for honest users to access data, the perceived risk of a data breach, and the anticipated value of the information to an adversary, and is a calculation that must be left to the adopter.  Note that honest users need not solely rely on sequential operations on their own computer: as described in \Cref{section:OutsourceSolving} an ArchiveSafe installation could be configured so that honest users offload their puzzle solving tasks to private or commercial clouds which are spun up on demand with large amounts of parallelization to reduce the wall clock time before they can access a file.

\begin{table}[t]
\centering
\scalebox{0.95}{
\begin{tabular}{lrrrr}
  \toprule
    &\multicolumn{1}{c}{D3} &\multicolumn{1}{c}{D4} & \multicolumn{1}{c}{D5} & \multicolumn{1}{c}{D6} \\
  \midrule
  \multicolumn{5}{l}{\textit{Honest user decrypting 1 file}} \\
  Local machine, threaded 4 cores, 2.2\,GHz & ~~~0.5\,min. & 7.3\,min. & 2\,hrs. & 31\,hrs. \\
  Cloud server \texttt{c5.metal}, spot pricing & $\ll$\$0.01 & $<$\$0.01 & \$0.05 & \$0.73 \\
  \midrule
  \multicolumn{5}{l}{\textit{Adversary decrypting 1 million files}} \\
  Cloud server \texttt{c5.metal} & 8\,days & ~~130\,days & 5.7\,yrs. & 91.4\,yrs. \\
  Cloud server \texttt{c5.metal}, spot pricing & \$178 & \$2,852 & ~~\$45,648 & ~~\$730,364 \\
  \bottomrule
\end{tabular}
}
\smallskip
\caption{Dollar cost and computation time required to unlock ArchiveSafe files}
\label{tab:cost}
\vspace{-2em}
\end{table}

\Cref{tab:cost} shows examples of costs at higher difficulty levels.  To provide further interpretation to these costs, we look not only at the computation time required for an honest user on our test platform to decrypt a file, but also at the real-world cost for an adversary, based on the cost of renting computation time on Amazon Web Services (AWS) Elastic Cloud Compute (EC2) platform.  EC2 has many machine types available; Argon2 is designed to not be substantially accelerated by more sophisticated architectures, GPUs, or ASICs.  As such we choose for our pricing example an EC2 instance that minimizes cost per core-GHz-hour; the \texttt{c5.metal} EC2 instance type has 96 Intel Xeon cores running at 3.6\,GHz at a cost of USD\$0.9122 per hour using Amazon's cheapest spot pricing model.\footnote{\url{https://aws.amazon.com/ec2/instance-types/}, \\\url{https://aws.amazon.com/ec2/spot/pricing/}; prices as of April 23, 2020.}

We can see, for example, that at difficulty D5, an honest user can unlock an archived file with about 2 hours of work on a local machine, or about 3 minutes of \texttt{c5.metal} rental costing 4.5 cents at spot pricing (20 cents on-demand pricing).  However, an adversary trying to decrypt 1 million such files from a data breach would need 5.7 years of \texttt{c5.metal} rental at a spot pricing cost of USD\$45,648.

\section{Conclusion}\label{section:Conclusion}

ArchiveSafe, using difficulty-based keyless encryption, can add defense-in-depth to confidentiality of archived data and change the economics of mass leakage attacks via data breaches.
We expect that most uses of ArchiveSafe would be in addition to, not as a replacement for, traditional keyed encryption; full cryptographic security would be achieved if encryption keys are properly managed and kept safe, but ArchiveSafe provides a residual level of protection if traditional encryption keys are also breached.
This means the key management service is no longer a single point of failure.

One target application is IT systems which retain large amounts of archival data, most of which will be rarely or perhaps never again accessed by legitimate users. 
Although honest users have no advantage in difficulty-based decryption compared to an adversary on a file-by-file basis, if their operational goals are different---an honest user decrypting 1 file occasionally, versus an adversary decrypting thousands or millions of files quickly---their costs are different.

Our approach can be applied in a variety of system architectures:
local storage and execution (as demonstrated by our prototype), local storage with private or public cloud assistance for puzzle solving, or remote (file server / cloud) storage with local or assisted puzzle solving.
Our approach can also apply to different storage paradigms, including file systems, cloud ``blob'' storage, and databases.

Puzzle difficulty can be set as a system-wide or with higher granularity based individual records' sensitivity.  A novel features of our construction is the ability to degrade puzzle difficulty effectively for free, which could be built into periodic maintenance or through a heuristic system based on suspicious activity.

\subsection*{Acknowledgements}

This work grew out of earlier discussions on use of puzzles for database encryption with Farhad Moghimifar, Suriadi Suriadi, and Ernest Foo at the Queensland University of Technology. R.S. is supported by Natural Sciences and Engineering Research Council of Canada (NSERC) Discovery grant RGPIN-2016-06062.  D.S. is supported by NSERC Discovery grant RGPIN-2016-05146 and NSERC Discovery Accelerator Supplement grant RGPIN-2016-05146.

\iffullversion
\appendix
\section{Proofs}\label{appendix:Proofs}

\subsection{Security of Generic DBKE Construction $\Gamma$}\label{appendix:Proofs:generic-DBKE}

\begin{proof}[of \Cref{thm:generic}]
The security proof proceeds as a sequence of games.  For Game $G_i$, let $S_i$ denote the event that game $G_i$ outputs true.  Ley $\KeySp$ be the key space of the symmetric encryption scheme $\Pi$, which is also the key space of the keyless key wrap scheme $\Sigma$.

\paragraph{Game 0.}
Denoted $G_0$, Game 0 as shown in the left side of \Cref{fig:pf:generic:games} is the db-ind experiment from \Cref{fig:sec:db-ind} with construction $\Gamma=\Gamma[\Pi,\Sigma]$ inline.  Thus,
\begin{equation}\label{eqn:thm:generic:0}
\Pr\left[ \Exp{db-ind}{\Gamma,\difficulty}(\Adversary) \Rightarrow \true \right] = \Pr[S_0] \enspace .
\end{equation}

\paragraph{Game 1.}
In this game, the challenger generates two symmetric encryption keys $k$ and $k'$; it uses $k$ in the key wrapping scheme, but $k'$ in the symmetric encryption scheme. This is shown in Game $G_1$ in the right side of \Cref{fig:pf:generic:games}.

\begin{figure}[t]
\centering
\scalebox{0.95}{
\fbox{
\begin{minipage}[t]{0.45\textwidth}
\underline{Game $G_0$:}
\begin{enumerate}
\item $(m_0, m_1, st) \getsr \Adversary(1^\difficulty)$
\item $b \getsr \bit$
\item $(\key, \wrappedkey) \getsr \Sigma.\Wrap(\difficulty)$
\item $\ctxt \getsr \Pi.\Enc(\key, \msg_b)$
\item $b' \getsr \Adversary((\ctxt, \wrappedkey), st)$
\item return $(b' = b)$
\end{enumerate}
\end{minipage}
~
\begin{minipage}[t]{0.45\textwidth}
\underline{Game $G_1$:}
\begin{enumerate}
\item $(m_0, m_1, st) \getsr \Adversary(1^\difficulty)$
\item $b \getsr \bit$
\item $(\key, \wrappedkey) \getsr \Sigma.\Wrap(\difficulty)$
\item \gamechange{$\key' \getsr \KeySp$}
\item $\ctxt \getsr \Pi.\Enc(\gamechange{$\key'$}, \msg_b)$
\item $b' \getsr \Adversary((\ctxt, \wrappedkey), st)$
\item return $(b' = b)$
\end{enumerate}
\end{minipage}
}
}
\caption{Sequence of games for proof of Theorem~\ref{thm:generic}.  Changes between games are \gamechange{highlighted}.}
\label{fig:pf:generic:games}
\vspace{-1em}
\end{figure}

First we show in \Cref{claim:generic:1} that Game 0 and Game 1 are indistinguishable under the assumption that the key wrapping scheme is secure.  Then we argue in \Cref{claim:generic:2} that breaking Game 1 corresponds to breaking the indistinguishability of the symmetric key encryption scheme.

\begin{claim}\label{claim:generic:1}
Let $\Bdversary_1$ be the algorithm shown in \Cref{fig:pf:generic:B}, which is an adversary against the key indistinguishability of keyless key wrap scheme $\Sigma$. Then
\begin{equation}\label{eqn:claim:generic:1}
\left| \Pr[S_0] - \Pr[S_1] \right| \le \Adv{key-ind}{\Sigma,\difficulty}(\Bdversary_1^\Adversary) \enspace .
\end{equation}
\end{claim}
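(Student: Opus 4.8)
The plan is a standard two-world reduction: I construct the adversary $\Bdversary_1$ of Figure~\ref{fig:pf:generic:B} so that the hidden bit of $\Exp{key-ind}{\Sigma,\difficulty}$ selects exactly between $G_0$ and $G_1$. Concretely, $\Bdversary_1^{\Adversary}$ plays the key-indistinguishability game: it is handed a wrapped key $\wrappedkey$ together with a challenge key $k$ that equals the genuinely wrapped key when the game's hidden bit is $0$ and an independent uniform key from $\KeySp$ when it is $1$. It then runs $\Adversary(1^{\difficulty})$ to obtain $(m_0,m_1,st)$, samples $b \getsr \bit$, forms $\ctxt \getsr \Pi.\Enc(k,m_b)$ using that challenge key, hands $(\ctxt,\wrappedkey)$ to $\Adversary$, and upon receiving $\Adversary$'s guess $b'$ outputs the guess ``real'' (hidden bit $0$) iff $b'=b$.

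The heart of the argument is to check that each world is simulated perfectly. When the challenge key is the real wrapped key, the key that produces $\wrappedkey$ and the key that encrypts $m_b$ coincide with the single key output by $\Sigma.\Wrap(\difficulty)$, so $\Adversary$'s view $(\ctxt,\wrappedkey)$ is distributed exactly as in $G_0$; consequently $\Adversary$ succeeds (i.e.\ $b'=b$) with probability $\Pr[S_0]$. When the challenge key is the independent uniform key, the message is encrypted under a fresh key unrelated to $\wrappedkey$, while $\wrappedkey$ still encapsulates the real key, which is precisely lines~4 and~5 of $G_1$; here $\Adversary$ succeeds with probability $\Pr[S_1]$. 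In both worlds $\wrappedkey$ is a faithful output of $\Sigma.\Wrap(\difficulty)$, so no statistical defect is introduced and the simulation is exact.

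It then remains to do the bookkeeping. Since the hidden bit is uniform, $\Bdversary_1$ wins with probability $\tfrac12\Pr[S_0] + \tfrac12\bigl(1-\Pr[S_1]\bigr) = \tfrac12 + \tfrac12\bigl(\Pr[S_0]-\Pr[S_1]\bigr)$, so by the definition of advantage $\Adv{key-ind}{\Sigma,\difficulty}(\Bdversary_1^{\Adversary}) = \bigl|\Pr[S_0]-\Pr[S_1]\bigr|$, which in particular yields the bound \eqref{eqn:claim:generic:1}. Finally $\Bdversary_1$ merely runs $\Adversary$ once and performs a single symmetric encryption, so its runtime matches that of $\Adversary$ up to the cost of one $\Pi.\Enc$ call.

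The conceptual content here is routine, so I do not expect a real obstacle; the only points requiring care are the two that make or break such reductions. First, the independence of the random challenge key from $\wrappedkey$ is exactly what lets the $G_1$-world simulation be faithful, and it is the sole place where the security of the wrap scheme is invoked. Second, one must keep the orientation of $\Bdversary_1$'s output bit aligned with the key-ind hidden bit (mapping the two worlds to $\Pr[S_0]$ and $1-\Pr[S_1]$ with the correct signs) so that the absolute-value advantage collapses to $\bigl|\Pr[S_0]-\Pr[S_1]\bigr|$ rather than picking up a spurious additive constant.
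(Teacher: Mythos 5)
Your proof is correct and follows essentially the same route as the paper's: the same reduction $\Bdversary_1$ (run $\Adversary$, encrypt $m_b$ under the challenge key, hand $(\ctxt,\wrappedkey)$ to $\Adversary$, and output ``real'' iff $\Adversary$ guesses correctly), the same perfect-simulation argument mapping the two worlds of the key-indistinguishability game onto $G_0$ and $G_1$, and the same advantage bookkeeping (your version is in fact more explicit, computing the win probability $\tfrac12 + \tfrac12(\Pr[S_0]-\Pr[S_1])$ exactly). The only difference is cosmetic: you phrase the challenge as a single real-or-random key, whereas the paper's experiment hands over both $k_0$ and $k_1$ with the hidden bit selecting which one $\wrappedkey$ wraps---but since the paper's reduction ignores $k_1$ entirely, the two formulations are interchangeable.
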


\begin{proof}
$\Bdversary_1$'s input is a challenge $(\wrappedkey,k_0,k_1)$ from a challenger for the key indistinguishability of difficulty-based keyless key wrap scheme $\Sigma$.  This means that $\wrappedkey$ is the wrapping of either $k_0$ or $k_1$, chosen by a random hidden bit $b$ in $\Exp{key-ind}{\Sigma,\difficulty}$.  When the hidden bit $b=0$, and hence when $w$ is the wrapping of $k_0$, then, in the ciphertext $(\ctxt,\wrappedkey)$ that $\Bdversary_1$ gives to $\Adversary$, the key used in the key wrapping is \emph{the same as} the key used in the symmetric encryption scheme, so $\Bdversary_1$ exactly simulates Game 0 to $\Adversary$.  When the hidden bit $b=1$, and hence when $\wrappedkey$ is the wrapping of $k_1$, then, in the ciphertext $(\ctxt,\wrappedkey)$ that $\Bdversary_1$ gives to $\Adversary$, the key used in the key wrapping is \emph{different from} the key used in the symmetric encryption scheme, so $\Bdversary_1$ exactly simulates Game 1 to $\Adversary$.  Thus, if $\Adversary$ outputs $\hat{b}'$ with different probabilities in Game 0 compared to Game 1, then $\Bdversary_1^\Adversary$ outputs $\hat{b}'$ with different probabilities when the hidden bit $b$ is 0 or 1. This shows that \cref{eqn:claim:generic:1} holds. \squareforqed
\end{proof}

\begin{figure*}[t]
\centering
\scalebox{0.95}{
\fbox{
\begin{minipage}[t]{0.32\textwidth}
\underline{$\Bdversary_1^\Adversary(w,k_0,k_1)$:}
\begin{enumerate}
\item $(m_0, m_1, st) \getsr \Adversary(1^\difficulty)$
\item $\hat{b} \getsr \bit$
\item $\ctxt \getsr \Pi.\Enc(k_0, m_{\hat{b}})$
\item $\hat{b}' \getsr \Adversary((\ctxt, \wrappedkey), st)$
\item if $(\hat{b}' = \hat{b})$ return 0
\item else return 1
\end{enumerate}
\end{minipage}
~
\begin{minipage}[t]{0.32\textwidth}
\underline{$\Bdversary_2^\Adversary()$:}
\begin{enumerate}
\item $(m_0, m_1, st) \getsr \Adversary(1^\difficulty)$
\item return $(m_0, m_1, st)$
\end{enumerate}
\end{minipage}
~
\begin{minipage}[t]{0.3\textwidth}
\underline{$\Bdversary_2^\Adversary(\ctxt, st)$:}
\begin{enumerate}
\item $(\hat{\key}, \wrappedkey) \getsr \Sigma.\Wrap(\difficulty)$
\item $\hat{b}' \getsr \Adversary((\ctxt, \wrappedkey), st)$
\item return $\hat{b}'$
\end{enumerate}
\end{minipage}
}
}
\caption{Reductions for the proof of \Cref{thm:generic}.}
\label{fig:pf:generic:B}
\vspace{-1em}
\end{figure*}

\begin{claim}\label{claim:generic:2}
Let $\Bdversary_2$ be the algorithm shown in \Cref{fig:pf:generic:B}, which is an adversary against the one-time indistinguishability of symmetric encryption scheme $\Pi$. Then
\begin{equation}\label{eqn:claim:generic:2}
\Pr[S_1] \le \Pr\left[\Exp{ind}{\Pi}(\Bdversary_2^\Adversary) \Rightarrow \true\right] \enspace .
\end{equation}
\end{claim}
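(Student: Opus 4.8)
The plan is to show that the reduction $\Bdversary_2$ of \Cref{fig:pf:generic:B} is a \emph{perfect} simulator of Game $G_1$ from $\Adversary$'s point of view, so that $\Adversary$ wins $G_1$ exactly when $\Bdversary_2^\Adversary$ wins its one-time indistinguishability game against $\Pi$. Concretely, I would argue that the joint distribution of the pair $(\ctxt, \wrappedkey)$ handed to $\Adversary$ --- together with the hidden challenge bit it is trying to guess --- is identical in the two settings, whence $\Pr[S_1] = \Pr[\Exp{ind}{\Pi}(\Bdversary_2^\Adversary) \Rightarrow \true]$, and the claimed inequality \eqref{eqn:claim:generic:2} follows a fortiori.

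To establish the distributional match I would trace $\Bdversary_2$ through the experiment $\Exp{ind}{\Pi}$ step by step. The experiment first runs $\Bdversary_2^\Adversary()$, which forwards $\Adversary$'s message pair $(m_0, m_1, st)$ obtained by running $\Adversary(1^\difficulty)$ (with $\difficulty$ hardcoded into $\Bdversary_2$). The challenger then samples a uniform key $\key \getsr \KeySp$, a hidden bit $b \getsr \bit$, and forms $\ctxt \getsr \Pi.\Enc(\key, m_b)$. Finally $\Bdversary_2^\Adversary(\ctxt, st)$ generates a fresh wrapped key via $(\hat{\key}, \wrappedkey) \getsr \Sigma.\Wrap(\difficulty)$, runs $\Adversary$ on $(\ctxt, \wrappedkey)$, and returns its guess $\hat{b}'$ verbatim.

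The crucial observation is that this reproduces Game $G_1$ exactly. In $G_1$ the encryption key $\key'$ is sampled uniformly and independently of the wrapped key, while the key wrapped inside $\wrappedkey$ is never used for encryption; in $\Bdversary_2$ the roles are merely renamed --- the challenger's uniform $\key$ plays the part of the independent encryption key $\key'$, and $\hat{\key}$ (discarded) plays the part of the unused wrap key. Since $\wrappedkey$ is in both cases a genuine output of $\Sigma.\Wrap(\difficulty)$ and is independent of the key used to form $\ctxt$, the view $(\ctxt, \wrappedkey, st)$ and the bit $b$ are identically distributed, so $\Adversary$'s output agrees with the hidden bit with the same probability in both games. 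This yields the equality $\Pr[S_1] = \Pr[\Exp{ind}{\Pi}(\Bdversary_2^\Adversary) \Rightarrow \true]$, which is stronger than what \Cref{claim:generic:2} asserts.

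I do not anticipate a genuine obstacle here, since the entire purpose of the $G_0 \to G_1$ step (handled in \Cref{claim:generic:1}) was to decouple the wrap key from the encryption key; the only thing to check carefully is precisely that decoupling --- namely that replacing the unused wrap key by an independent $\hat{\key}$ and identifying the uniform encryption key $\key'$ with the challenger's key changes nothing observable, which holds because the two keys are independent with identical marginals in both settings and neither is revealed directly. Having both claims, I would finish the proof of \Cref{thm:generic} by the triangle inequality
\[ \Adv{db-ind}{\Gamma,\difficulty}(\Adversary) = \left| 2\Pr[S_0] - 1 \right| \le 2\left| \Pr[S_0] - \Pr[S_1] \right| + \left| 2\Pr[S_1] - 1 \right| , \]
bounding the first term by \Cref{claim:generic:1} and the second by \Cref{claim:generic:2}, which together produce the factor $2$ on the key-indistinguishability term and the additive symmetric-encryption term.
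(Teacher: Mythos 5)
Your proposal is correct and takes essentially the same approach as the paper: both argue that inlining $\Bdversary_2^\Adversary$ into $\Exp{ind}{\Pi}$ reproduces Game $G_1$ exactly, up to reordering of lines and renaming of variables (the challenger's uniform key playing the role of $\key'$, the discarded wrap-scheme key $\hat{\key}$ playing the role of the unused $\key$), so the simulation is perfect and the claim even holds with equality. No gaps.
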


\begin{proof}
$\Bdversary_2^\Adversary$ is an adversary in the security experiment $\Exp{ind}{\Pi}(\Bdversary_2^\Adversary)$ for the one-time indistinguishability of symmetric encryption scheme $\Pi$.  When we inline the code of $\Bdversary_2^\Adversary$ in $\Exp{ind}{\Pi}(\Bdversary_2^\Adversary)$, we see that it is exactly the same code as Game 1, except some lines are reordered, and some variables are named differently.  In particular, $\Adversary$ is run with a ciphertext $\ctxt$ that is the encryption of either $m_0$ or $m_1$ under the key $k$ from the symmetric encryption experiment, but this key is different from the key $\hat{\key}$ that is in the wrapped key $\wrappedkey$ that $\Adversary$ is provided with.  Thus, \cref{eqn:claim:generic:2} holds. \squareforqed
\end{proof}

Combining equations \eqref{eqn:thm:generic:0}, \eqref{eqn:claim:generic:1}, and \eqref{eqn:claim:generic:2}, we get
\begin{align*}
\Adv{db-ind}{\Gamma,\difficulty}(\Adversary) 
&= \left| 2 \cdot \Pr\left[ \Exp{db-ind}{\Gamma,\difficulty}(\Adversary) \Rightarrow \true\right] - 1 \right| \\
&= \left| 2 \cdot \Pr[S_0] - 1 \right| \tag{by \eqref{eqn:thm:generic:0}} \\
&= \left| 2 \cdot (\Pr[S_0] - \Pr[S_1] + \Pr[S_1]) - 1 \right| \\
&\le 2 \left| \Pr[S_0] - \Pr[S_1] \right| + \left| 2 \cdot \Pr[S_1] - 1 \right| \\
&\le 2 \Adv{key-ind}{\Sigma,\difficulty}(\Bdversary_1^\Adversary) + \left| 2 \cdot \Pr[S_1] - 1 \right| \tag{by \eqref{eqn:claim:generic:1}} \\
&\le 2 \Adv{key-ind}{\Sigma,\difficulty}(\Bdversary_1^\Adversary) + \left| 2 \cdot \Pr[\Exp{ind}{\Pi}(\Bdversary_2^\Adversary) \Rightarrow \true] - 1 \right| \tag{by \eqref{eqn:claim:generic:2}} \\
&= 2 \cdot \Adv{key-ind}{\Sigma,\difficulty}(\Bdversary_1^\Adversary) + \Adv{ind}{\Pi}(\Bdversary_2^\Adversary)
\end{align*}
which is the desired result.

By inspection of $\Bdversary_1^\Adversary$ and $\Bdversary_2^\Adversary$, we can see that their runtimes are the runtime of $\Adversary$ plus a minimal cost of either encryption or wrapping.
\qed
\end{proof}

\subsection{Security of Hash-Based Keyless Key Wrap Scheme $P$}\label{appendix:Proofs:kkw}

\begin{proof}[of \Cref{thm:kkw}]
Let $k_0$, $k_1$, and $\wrappedkey$ be as in $\Exp{key-ind}{P,\difficulty}$ in \Cref{fig:sec:key-ind} for keyless key wrap scheme $P$, so that $\seed$ is the seed behind $k_0$ and $\wrappedkey$. 

Let $W$ be the event that $\Exp{key-ind}{P,\difficulty}(\Adversary)$ outputs $\true$.  Let $E_i$ be the event that $\Adversary$ queries $\seed$ to random oracle $H_i$, for $i=1,2$.  Our task is to bound $\Pr[W]$, which we do using the following application of the law of total probability:
\begin{align*}
\Pr[W] 
=& \Pr[W|\lnot E_2] \cdot \Pr[\lnot E_2] \\ 
 & + \Pr[W|E_2 \land E_1] \cdot \Pr[E_2 \land E_1] + \Pr[W|E_2 \land \lnot E_1] \cdot \Pr[E_2 \land \lnot E_1]
\end{align*}

If $E_2$ does not occur, then, since $k_0=H_2(\seed)$, $\Adversary$ has no information about $k_0$ and thus has no advantage in distinguishing $k_0$ from $k_1$, so $\Pr[W|\lnot E_2] = \frac{1}{2}$.

Next, we observe that $\Pr[E_2 \land E_1] \le \Pr[E_1]$.  The only way $\Adversary$ can learn information about $\seed$ (and hence $\key$) is by querying values to $H_1$, and since $H_1$ is a random oracle, the adversary can rule out at most one guess for the missing $\difficulty$ bits of $\seed$ with each query to $H_1$.  Thus $\Pr[E_1] \le \frac{q_1}{2^\difficulty}$.

Now we observe that $\Pr[E_2 \land \lnot E_1] = \Pr[E_2 | \lnot E_1] \Pr[\lnot E_1] \le \Pr[E_2 | \lnot E_1]$.  Since the $q_1$ queries to $H_1$ could have ruled out $q_1$ candidate values for the missing bits of $\seed$, we have that $\Pr[E_2 | \lnot E_1] \le \frac{q_2}{2^\difficulty - q_1}$.  Additionally, we note that, when $E_2 \land \lnot E_1$ occurs, $\Adversary$ has no information to help it determine which of its $q_2$ queries to $H_2$ caused $E_2$ to occur, so $\Pr[W|E_2 \land \lnot E_1] = \frac{1}{q_2}$.

Substituting the above observations into the expression for $\Pr[W]$, and bounding all other probabilities by 1, we get
\[ \Pr[W] \le \left(\frac{1}{2} \cdot 1\right) + \left(1 \cdot \frac{q_1}{2^\difficulty}\right) + \left(\frac{1}{q_2} \cdot \frac{q_2}{2^\difficulty - q_1}\right) = \frac{1}{2} + \frac{q_1}{2^\difficulty} + \frac{1}{2^\difficulty-q_1} \enspace ; \]
substituting into the advantage expression yields the desired result.
\qed
\end{proof}

\fi

\end{document}